\newtheorem{example}{Example} % [section]
\newtheorem{definition}{Definition} % [section]
\newcommand\bcmdtab{\noindent\bgroup\tabcolsep=0pt%
  \begin{tabular}{@{}p{10pc}@{}p{20pc}@{}}}
\newcommand\ecmdtab{\end{tabular}\egroup}
\renewcommand{\tt}{\ttfamily}
\newcommand{\codefont}{\small\tt}
\newcommand{\code}[1]{\mbox{\codefont{#1}}}
\newcommand{\ccode}[1]{``\code{#1}''}
\newcommand{\us}{\raise-.8ex\hbox{-}}
\newcommand{\xtilde}{\!\raise-.75ex\hbox{\char`\~}} % low tilde
\newcommand{\funset}{\ensuremath{_{\cal S}}} % set function
\newcommand{\Rc}{{\cal{R}}} % rewrite system
\newcommand{\tostar}{\buildrel * \over \to}
\newcommand{\ol}[1]{\overline{#1}}  % sequence of objects
\newcommand{\tree}{{\cal T}}
\newcommand{\pattern}{\pi}
\newcommand{\listline}{\vrule width0pt depth1.5ex}
\title[Default Rules for Curry]
      {Default Rules for Curry\footnote{%
This is an extended version of a paper presented at the
international symposium on Practical Aspects of Declarative Languages
(PADL 2016), invited as a rapid communication in TPLP.
The authors acknowledge the assistance of the conference program chairs
Marco Gavanelli and John Reppy.}}
\author[Sergio Antoy and Michael Hanus]
       {SERGIO ANTOY \\
         Computer Science Dept., Portland State University, Oregon, U.S.A.\\
         \email{antoy@cs.pdx.edu}
\and
        MICHAEL HANUS \\
         Institut f\"ur Informatik, CAU Kiel, D-24098 Kiel, Germany. \\
         \email{mh@informatik.uni-kiel.de}}
\begin{document}

\label{firstpage}

\maketitle

\begin{abstract}
In functional logic programs,
rules are applicable independently of textual order, 
i.e., any rule can potentially be used to evaluate an expression.
This is similar to logic languages and contrary
to functional languages, e.g., Haskell
enforces a strict sequential interpretation of rules.
However, in some situations it is convenient
to express alternatives by means of compact default rules.
Although default rules are often used
in functional programs, the non-deterministic
nature of functional logic programs does not allow
to directly transfer this concept from functional
to functional logic languages in a meaningful way.
In this paper we propose a new concept of default rules
for Curry that supports a programming style similar
to functional programming while preserving the core
properties of functional logic programming, i.e.,
completeness, non-determinism, and logic-oriented use
of functions. We discuss the basic concept
and propose an implementation
which exploits advanced features of functional logic
languages.\\[2ex]
\emph{To appear in Theory and Practice of Logic Programming (TPLP)}
\end{abstract}

\begin{keywords}
functional logic programming, semantics, program transformation
\end{keywords}

%%%%%%%%%%%%%%%%%%%%%%%%%%%%%%%%%%%%%%%%%%%%%%%%%%%%%%%%%%%%%%%%%%%%%%%%%%
\section{Motivation}
\label{Motivation}

Functional logic languages combine the most important
features of functional and logic programming in a single language
(see \cite{AntoyHanus10CACM,Hanus13} for recent surveys).
In particular, the functional logic language Curry \cite{Hanus16Curry}
conceptually extends Haskell with common features of logic programming,
i.e., non-determinism, free variables, and constraint solving.
Moreover, the amalgamated features of Curry support
new programming techniques, like \emph{deep} pattern matching
through the use of \emph{functional patterns}, i.e., evaluable functions
at pattern positions \cite{AntoyHanus05LOPSTR}.

For example, suppose that we want to compute two elements $x$ and $y$
in a list $l$ with the
property that the distance between the two elements is $n$, i.e., in
$l$ there are $n-1$ elements between $x$ and $y$.  We will use this 
condition in the $n$-queens program discussed later.  Of course, there
may be many pairs of elements in a list satisfying the given condition
(\ccode{++} denotes the concatenation of lists):
\begin{curry}
dist n (_++[x]++zs++[y]++_) | n == length zs + 1 = (x,y)
\end{curry}
Defining functions by case distinction through pattern matching
is a very useful feature.  Functional patterns make this
feature even more convenient.
However, in functional logic languages, this feature
is slightly more delicate because of the possibility of
functional patterns, which typically stand for an infinite number of
standard patterns, and because there is no textual order among the rules
defining an operation. The variables in a functional pattern are bound
like the variables in ordinary patterns.

As a simple example, consider an operation \code{isSet}
intended to check whether a given list represents a set,
i.e., does not contain duplicates.
In Curry, we might think to implement it as follows:
\begin{curry}
isSet (_++[x]++_++[x]++_) = False
isSet _                   = True
\end{curry}
The first rule uses a functional pattern: it returns \code{False}
if the argument matches a list where two identical elements occur.
The intent of the second rule is to return \code{True}
if no identical elements occur in the argument.
However, according to the semantics of Curry,
which ensures completeness w.r.t.\ finding solutions or values,
\emph{all} rules are tried to evaluate an expression.
Therefore, the second rule is always applicable
to calls of \code{isSet} so that the expression
\code{isSet$\,$[1,1]} will be evaluated to \code{False} \emph{and}
\code{True}.

The unintended application of the second rule can be avoided
by the additional requirement that this rule should be applied only if
no other rule is applicable. We call such a rule a \emph{default rule}
and mark it by adding the suffix \code{'default} to the
function's name (in order to avoid a syntactic extension of the
base language).
Thus, if we define \code{isSet} with the rules
\begin{curry}
isSet (_++[x]++_++[x]++_) = False
isSet'default _           = True
\end{curry}
then \code{isSet$\,$[1,1]} evaluates only to \code{False}
and \code{isSet$\,$[0,1]} only to \code{True}.

In this paper we propose a concept for default rules for Curry,
define its precise semantics, and discuss implementation options.
In the next section, we review the main concepts of
functional logic programming and Curry.
Our intended concept of default rules is informally introduced
in Sect.~\ref{sec:concept}.
Some examples showing the convenience of default rules for programming
are presented in Sect.~\ref{sec:examples}.
In order to avoid the introduction of a new semantics
specific to default rules, we define the precise meaning
of default rules by transforming them into already known concepts
in Sect.~\ref{sec:transsem}.
Options to implement default rules efficiently are discussed
and evaluated in Sect.~\ref{sec:impl}.
Some benchmarking of alternative implementations of
default rules are shown in Sect.~\ref{sec:bench}
before we relate our proposal to other work and conclude.

%%%%%%%%%%%%%%%%%%%%%%%%%%%%%%%%%%%%%%%%%%%%%%%%%%%%%%%%%%%%%%%%%%%%%%%%
\section{Functional Logic Programming and Curry}
\label{sec:flp}

Before presenting the concept and implementation of default rules in
more detail, we briefly review those elements of functional logic languages
and Curry that are necessary to understand the contents of this paper.
More details can be found in recent surveys on
functional logic programming \cite{AntoyHanus10CACM,Hanus13}
and in the language report \cite{Hanus16Curry}.

Curry is a declarative multi-paradigm language
combining in a seamless way features from functional,
logic, and concurrent programming (concurrency is
irrelevant as our work goes, hence it is ignored in this paper).
The syntax of Curry is close to Haskell \cite{PeytonJones03Haskell},
i.e., type variables and names of defined operations
usually start with lowercase letters and the names of type
and data constructors start with an uppercase letter.
\code{$\alpha \to \beta$} denotes the type of all functions
mapping elements of type $\alpha$ into elements of type $\beta$
(where $\beta$ can also be a functional type, i.e.,
functional types are ``curried''),
and the application of an operation $f$
to an argument $e$ is denoted by juxtaposition (``$f~e$'').
In addition to Haskell, Curry allows
\emph{free} (\emph{logic}) \emph{variables}
in conditions and right-hand sides of rules
and expressions evaluated by an interpreter.
Moreover, the patterns of a defining rule can be non-linear,
i.e., they might contain multiple occurrences of some variable,
which is an abbreviation for equalities between these occurrences.

\begin{example}\label{ex-conc}
The following simple program shows the functional and logic features
of Curry. It defines an operation \ccode{++} to concatenate two lists,
which is identical to the Haskell encoding.
The second operation, \code{dup}, returns some list element having at least
two occurrences:\footnote{Note that Curry requires the explicit declaration
of free variables, as \code{x} in the rule of \code{dup},
to ensure checkable redundancy.}
\begin{curry}
(++) :: [a] -> [a] -> [a]
[]     ++ ys = ys
(x:xs) ++ ys = x : (xs ++ ys)$\listline$
dup :: [a] -> a
dup xs | xs == _$\,$++$\,$[x]$\,$++$\,$_$\,$++$\,$[x]$\,$++$\,$_
       = x
       where$\;$x$\;$free
\end{curry}
\end{example}
Operation applications can contain free variables.
They are evaluated lazily where free variables as demanded arguments
are non-deterministically instantiated.
Hence, the condition of the rule defining \code{dup}
is solved by instantiating \code{x} and the anonymous free variables \ccode{\us}.
This evaluation method corresponds to narrowing \cite{Slagle74,Reddy85},
but Curry narrows with possibly non-most-general unifiers
to ensure the optimality of computations \cite{AntoyEchahedHanus00JACM}.

Note that \code{dup} is a \emph{non-deterministic operation}
since it might deliver more than one result for a given argument,
e.g., the evaluation of \code{dup$\,$[1,2,2,1]} yields the values
\code{1} and \code{2}.
Non-deterministic operations, which are
interpreted as mappings from values into sets of values \cite{GonzalezEtAl99},
are an important feature
of contemporary functional logic languages.
Hence, there is also a predefined \emph{choice} operation:
\label{ex:nondetchoice}
\begin{curry}
x ? _  =  x
_ ? y  =  y
\end{curry}
Thus, the expression \ccode{0$~$?$~$1} evaluates to \code{0} and \code{1}
with the value non-determin\-istically chosen.

Some operations can be defined more easily and directly using
\emph{functional patterns} \cite{AntoyHanus05LOPSTR}.
A functional pattern is a pattern occurring in an argument
of the left-hand side of a rule containing defined operations
(and not only data constructors and variables).
Such a pattern abbreviates the set of all standard patterns to which the
functional pattern can be evaluated (by narrowing).
For instance, we can rewrite the definition of \code{dup} as
\begin{curry}
dup (_++[x]++_++[x]++_) = x
\end{curry}
Functional patterns are a powerful feature to express arbitrary selections
in tree structures, e.g., in XML documents \cite{Hanus11ICLP}.
Details about their semantics and a constructive implementation
of functional patterns by a demand-driven unification procedure
can be found in \cite{AntoyHanus05LOPSTR}.

\emph{Set functions} \cite{AntoyHanus09} allow
the encapsulation of non-determi\-nistic computations
in a strategy-independent manner.
For each defined operation $f$, $f\funset$ denotes
the corresponding set function.
$f\funset$ encapsulates the non-determinism
caused by evaluating $f$ except for the non-determinism caused
by evaluating the arguments to which $f$ is applied.
For instance, consider the operation \code{decOrInc} defined by
\begin{curry}
decOrInc x = (x-1) ? (x+1)
\end{curry}
Then \ccode{decOrInc$\funset$ 3} evaluates to (an abstract representation of)
the set $\{\code{2},\code{4}\}$, i.e., the non-determinism caused by
\code{decOrInc} is encapsulated into a set.
However, \ccode{decOrInc$\funset$ (2\,?\,5)} evaluates to
two different sets $\{\code{1},\code{3}\}$ and $\{\code{4},\code{6}\}$
due to its non-deterministic argument, i.e.,
the non-determinism caused by the argument is not encapsulated.
This property is desirable and essential to define and implement
default rules by a transformational approach, as shown in
Sect.~\ref{sec:transsem}.
In the following section, we discuss default rules and
their intended semantics.

%%%%%%%%%%%%%%%%%%%%%%%%%%%%%%%%%%%%%%%%%%%%%%%%%%%%%%%%%%%%%%%%%%%%%%%%
\section{Default Rules: Concept and Informal Semantics}
\label{sec:concept}

Default rules are often used in both functional and
logic programming.
In languages in which rules are applied in textual order,
such as Haskell and Prolog,
loosely speaking every rule is a default rule of all the preceding rules.
For instance, the following standard Haskell function
takes two lists and returns the list of corresponding pairs,
where excess elements of a longer list are discarded:
\begin{curry}
zip (x:xs) (y:ys) = (x,y) : zip xs ys
zip _      _      = []
\end{curry}
The second rule is applied only if the first rule is not applicable,
i.e., if one of the argument lists is empty.
We can avoid the consideration of rule orderings by replacing
the second rule with rules for the patterns not matching the first rule:
\begin{curry}
zip (x:xs) (y:ys) = (x,y) : zip xs ys
zip (_:_)  []     = []
zip []     _      = []
\end{curry}
In general, this coding is cumbersome since the number of
additional rules increases if the patterns of the first rule
are more complex (e.g., we need three additional rules
for the operation \code{zip3} combining three lists).
Moreover, this coding might be impossible in conjunction
with some functional patterns, as in the first rule of \code{isSet} above.
Some functional patterns conceptually denote an infinite set
of standard patterns
(e.g., \code{[x,x]}, \code{[x,\us,x]}, \code{[\us,x,\us,x]}, \ldots)
and the complement of this set is infinite too.

In Prolog, one often uses the ``cut'' operator to
implement the behavior of default rules.
For instance, \code{zip} can be defined as a Prolog predicate
as follows:
\begin{prolog}
zip([X|Xs],[Y|Ys],[(X,Y)|Zs]) :- !, zip(Xs,Ys,Zs).
zip(_,_,[]).
\end{prolog}
Although this definition behaves as intended for instantiated lists,
the completeness of logic programming is destroyed by the cut operator.
For instance, the goal \code{zip([],[],[])} is provable,
but Prolog does not compute the answer \code{\{Xs=[],Ys=[],} \code{Zs=[]\}}
for the goal \code{zip(Xs,Ys,Zs)}.

These examples show that neither the functional style nor
the logic style of default rules is suitable for functional logic programming.
The functional style, based on textual order, curtails non-determinism.
The logic style, based on the \emph{cut} operator, destroys the completeness
of some computations.
Thus, a new concept of default rules
is required for functional logic programming
if we want to keep the strong properties of the base language,
in particular, a simple-to-use non-determinism and
the completeness of logic-oriented evaluations.
Before presenting the exact definition of default rules,
we introduce them informally and discuss their intended semantics.

We intend to extend a ``standard'' operation definition by one default rule.
Hence, an operation definition with a default rule has the following form
($\ol{o_k}$ denotes a sequence of objects $o_1 \ldots o_k$):\footnote{%
We consider only conditional rules since an unconditional rule
can be regarded as a conditional rule with condition \code{True}.}
\begin{curry}
$f$ $\ol{t^1_k}$ | $c_1$ = $e_1$
$\vdots$
$f$ $\ol{t^n_k}$ | $c_n$ = $e_n\listline$
$f$'default $\ol{t^{n+1}_k}$ | $c_{n+1}$ = $e_{n+1}$
\end{curry}
We call the first $n$ rules \emph{standard rules} and the final rule
the \emph{default rule} of $f$.
Informally, the default rule is applied only if no standard
rule is applicable, where a rule is applicable if the pattern matches
and the condition is satisfied.
Hence, an expression $e = f~\ol{s_k}$,
where $\ol{s_k}$ are expressions, is evaluated as follows:

\begin{enumerate}
\item
  The arguments $\ol{s_k}$ are evaluated enough to determine
  whether a standard rule of $f$ is applicable, i.e.,
  whether there exists a standard rule whose left-hand side
  matches the evaluated $e$ and the condition is satisfied 
  (i.e., evaluable to \code{True}).
\item
  If a standard rule is applicable, it is applied; otherwise
  the default rule is applied.
\item
  If some argument is non-deterministic,
  the previous points apply independently
  for each non-deterministic choice of the combination of arguments.
  In particular, if an argument is a free variable, it is
  non-deterministically instantiated so that every
  potentially applicable rule can be used.
\end{enumerate}
As usual in a non-strict language like Curry,
arguments of an operation application
are evaluated as they are demanded by the operation's
pattern matching and condition. However, any non-determinism or failure
during argument evaluation is not passed inside the
condition evaluation.  A precise definition of  ``inside''
is in \cite[Def. 3]{AntoyHanus09}.
This behavior is quite similar to set functions to
encapsulate internal non-determinism.
Therefore, we will exploit set functions to implement default rules.

Before discussing the advantages and implementation of default rules,
we explain and motivate the intended semantics of our proposal.
First, it should be noted that this concept distinguishes
non-determinism outside and inside a rule application.
This difference is irrelevant in purely functional programming but essential
in functional logic programming.

\begin{example}\label{ex:zipdefault}
Consider the operation \code{zip} defined with a default rule:
\begin{curry}
zip (x:xs) (y:ys) = (x,y) : zip xs ys
zip'default _ _   = []
\end{curry}
Since the standard rule is applicable to \code{zip$\,$[1]$\,$[2]},
the default rule is ignored so that this expression is solely reduced to
\code{(1,2):zip\,[]\,[]}.
Since the standard rule is not applicable to
\code{zip$\,$[]$\,$[]}, the default rule is applied and yields
the value \code{[]}.
Altogether, the only value of \code{zip$\,$[1]$\,$[2]} is \code{[(1,2)]}.
However, if some argument has more than one value,
we use the evaluation principle above for each combination.
Thus, the call \code{zip\,([1]\,?\,[])\,[2]} yields the two values
\code{[(1,2)]} and \code{[]}.
\end{example}
These considerations are even more relevant
if the evaluation of the condition might be non-deterministic,
as the following example shows.

\begin{example}\label{lookup-cond}
Consider an operation to look up values for keys in an association list:
\begin{curry}
lookup key assoc   | assoc == (_ ++ [(key,val)] ++ _)
                   = Just val
                   where val free
lookup'default _ _ = Nothing
\end{curry}
Note that the condition of the standard rule can be evaluated
in various ways. In particular, it can be evaluated (non-deterministically)
to \code{True} and \code{False} for a fixed association list and key.
Therefore, using if-then-else (or an \code{otherwise} branch
as in Haskell)
instead of the default rule might lead to unintended results.

If we evaluate \code{lookup$\;$2$\;$[(2,14),(3,17),(2,18)]},
the condition of the standard rule is satisfiable
so that the default rule is ignored.
Since the condition has the two solutions $\{\code{val}\mapsto 14\}$
and $\{\code{val}\mapsto 18\}$, we yield the values
\code{Just$\,$14} and \code{Just$\,$18}.
If we evaluate \code{lookup$\;$2$\;$[(3,17)]}, the condition of the
standard rule is not satisfiable but the default rule is applicable so that
we obtain the result \code{Nothing}.

On the other hand, non-deterministic arguments might trigger
different rules to be applied. Consider the expression
\code{lookup\,(2\,?\,3)\,[(3,17)]}. Since the non-determin\-ism
in the arguments leads to independent evaluations of the expressions
\code{lookup$\;$2} \code{[(3,17)]} and \code{lookup$\;$3$\;$[(3,17)]},
we obtain the results \code{Nothing} and \code{Just$\,$17}.

Similarly, free variables as arguments might lead to independent results
since free variables are equivalent to non-deterministic values
\cite{AntoyHanus06ICLP}.
For instance, the expression \code{lookup$\;$2$\;$xs}
yields the value \code{Just$\;$v} with
the binding $\{\code{xs}\mapsto \code{(2,v)\code{:}\_}\}$, but also
the  value \code{Nothing} with
the binding $\{\code{xs}\mapsto\code{[]}\}$ (as well as many other solutions).

The latter desirable property also has implications for the handling
of failures occurring when arguments are evaluated.
For instance, consider the expression \code{lookup$\;$2$\;$failed}
(where \code{failed} is a predefined operation which always fails
whenever it is evaluated).
Because the evaluation of the condition of the standard rule
demands the evaluation of \code{failed} and the subsequent failure
comes from ``outside'' the condition,
the entire expression evaluation fails instead of returning
the value \code{Nothing}.
This is motivated by the fact that we need the value of the
association list in order to check the satisfiability of the
condition and, thus, to decide the applicability of the standard rule,
but this value is not available.
\end{example}

\begin{example}\label{ex:isUnit}
To see why our design decision is reasonable,
consider the following contrived definition of an operation
that checks whether its argument is the unit value \code{()}
(which is the only value of the unit type):
\begin{curry}
isUnit x | x == () = True
isUnit'default _   = False
\end{curry}
In our proposal, the evaluation of \ccode{isUnit$\;$failed} fails.
In an alternative design (like Prolog's if-then-else construct),
one might skip any failure during condition checking and
proceed with the next rule. In this case, we would return
the value \code{False} for the expression \code{isUnit$\;$failed}.
This is quite disturbing since the (deterministic!) operation
\code{isUnit}, which has only one possible input value,
could return two values: \code{True} for the call \code{isUnit$\,$()}
and \code{False} for the call \code{isUnit$\;$failed}.
Moreover, if we call this operation with a free variable,
like \code{isUnit$\;$x}, we obtain the single binding
$\{\code{x}\mapsto\code{()}\}$ and value \code{True}
(since free variables are never bound to failures).
Thus, either our semantics would be incomplete for logic computations
or we compute too many values.
In order to get a consistent behavior,
we require that failures of arguments demanded for
condition checking lead to failures of evaluations.
\end{example}

%%%%%%%%%%%%%%%%%%%%%%%%%%%%%%%%%%%%%%%%%%%%%%%%%%%%%%%%%%%%%%%%%%%%%%%%
\section{Examples}
\label{sec:examples}

To show the applicability and convenience of default rules for functional logic programming,
we sketch a few more examples in this section.

\begin{example}
Default rules are important in combination with functional patterns,
since functional patterns denote an infinite set of standard patterns
which often has no finite complement.
Consider again the operation \code{lookup} as introduced in
Example~\ref{lookup-cond}.
With functional patterns and default rules, this operation
can be conveniently defined:
\begin{curry}
lookup key (_ ++ [(key,val)] ++ _) = Just val
lookup'default _  _                = Nothing 
\end{curry}
\end{example}

\begin{example}
Functional patterns are also useful to check the deep structure of
arguments. In this case, default rules are useful to express in an easy manner
that the check is not successful.
For instance, consider an operation that checks whether
a string contains a float number (without an exponent but with
an optional minus sign).
With functional patterns and default rules, the definition
of this predicate is easy:
\begin{curry}
isFloat (("-" ? "") ++ n1 ++ "." ++ n2)
        | (all isDigit n1 && all isDigit n2) = True
isFloat'default _ = False
\end{curry}
\end{example}

\begin{example}\label{ex:queens}
In the classical $n$-queens puzzle, one must place $n$ queens
on a chess board so that no queen can attack another queen.
This can be solved by computing some permutation
of the list \code{[1..$n$]}, where the $i$-th element denotes the row
of the queen placed in column $i$, and check whether this permutation
is a safe placement so that no queen can attack another in a diagonal.
The latter property can easily be expressed with functional patterns
and default rules where the non-default rule fails
on a non-safe placement:
\begin{curry}
safeDiag (_++[x]++zs++[y]++_) | abs (x-y) == length zs + 1 = failed
safeDiag'default xs = xs
\end{curry}
Hence, a solution can be obtained by computing a safe permutation:
\begin{curry}
queens n = safeDiag (permute [1..n])
\end{curry}
This example shows that default rules are a convenient way
to express negation-as-failure from logic programming.
\end{example}

\begin{example}\label{ex:colormap}
This programming pattern can also be applied to solve the map coloring problem.
Our map consists of the states of the Pacific Northwest
and a list of adjacent states:
\begin{curry}
data State = WA | OR | ID | BC$\listline$
adjacent = [(WA,OR),(WA,ID),(WA,BC),(OR,ID),(ID,BC)]
\end{curry}
Furthermore, we define the available colors and an operation
that associates (non-deterministically) some color to a state:
\begin{curry}
data Color = Red | Green | Blue$\listline$
color x = (x, Red ? Green ? Blue)
\end{curry}
A map coloring can be computed by an operation \code{solve}
that takes the information about potential colorings
and adjacent states as arguments, i.e., we compute correct colorings
by evaluating the initial expression
\begin{curry}
solve (map color [WA,OR,ID,BC]) adjacent
\end{curry}
The operation \code{solve} fails on a coloring where
two states have an identical color and are adjacent,
otherwise it returns the coloring:
\begin{curry}
solve (_++[(s1,c)]++_++[(s2,c)]++_) (_++[(s1,s2)]++_) = failed
solve'default cs _ = cs
\end{curry}
Note that the compact definition of the standard rule of \code{solve}
exploits the ordering in the definition of \code{adjacent}.
For arbitrarily ordered adjacency lists, we have to extend the
standard rule as follows:
\begin{curry}
solve (_++[(s1,c)]++_++[(s2,c)]++_) (_++[(s1,s2)$\,$?$\,$(s2,s1)]++_)
  = failed
\end{curry}
\end{example}

%%%%%%%%%%%%%%%%%%%%%%%%%%%%%%%%%%%%%%%%%%%%%%%%%%%%%%%%%%%%%%%%%%%%%%%%
\section{Transformational Semantics}
\label{sec:transsem}

In order to define a precise semantics of default rules,
one could extend an existing logic foundation of functional logic
programming (e.g., \cite{GonzalezEtAl99}) to include a meaning
of default rules.
This approach has been partially done in \cite{LopezSanchez04}
but without considering the different sources of non-determinism
(inside vs.{} outside) which is important for our intended semantics,
as discussed in Sect.~\ref{sec:concept}.
Fortunately, the semantic aspects of these issues have already
been discussed in the context of encapsulated search
\cite{AntoyHanus09,ChristiansenHanusReckSeidel13PPDP}
so that we can put our proposal on these foundations.
Hence, we do not develop a new logic foundation of functional logic programming
with default rules, but we provide a transformational semantics,
i.e., we specify the meaning of default rules by a transformation
into existing constructs of functional logic programming.

We start the description of our transformational approach by
explaining the translation of the default rule for \code{zip}.
A default rule is applied only if no
standard rule is applicable (because the rule's pattern does not
match the argument or the rule's condition is not satisfiable).
Hence, we translate a default rule into a regular rule by adding
the condition that no other rule is applicable.
For this purpose, we generate from the original standard rules
a set of ``test applicability only'' rules where the right-hand side
is replaced by a constant (here: the unit value \ccode{()}).
Thus, the single standard rule of \code{zip} produces 
the following new rule:
\begin{curry}
zip'TEST (x:xs) (y:ys) = ()
\end{curry}
Now we have to add to the default rule the condition that \code{zip'TEST}
is not applicable.
Since we are interested in the failure of attempts
to apply \code{zip'TEST} to the actual argument,
we have to check that this application has no value.
Furthermore, non-determinism and failures in the evaluation of
actual arguments must be distinguished from similar outcomes
caused by the evaluation of the condition.

All these requirements call for the encapsulation of a search for
values of \code{zip'TEST} where ``inside'' and ``outside''
non-determinism are distinguished and handled differently.
Fortunately, set functions \cite{AntoyHanus09}
(as sketched in Sect.~\ref{sec:flp})
provide an appropriate solution to this problem.
Since set functions have a strategy-independent denotational semantics
\cite{ChristiansenHanusReckSeidel13PPDP},
we will use them to specify and implement default rules.
Using set functions, one could translate the default rule into
\begin{curry}
zip xs ys | isEmpty (zip'TEST$\funset$ xs ys) = []
\end{curry}
Hence, this rule can be applied only if all attempts
to apply the standard rule fail.
To complete our example, we add this translated default rule
as a further alternative to the standard rule so that
we obtain the transformed program
\begin{curry}
zip'TEST (x:xs) (y:ys) = ()$\listline$
zip (x:xs) (y:ys) = (x,y) : zip xs ys
zip xs ys | isEmpty (zip'TEST$\funset$ xs ys) = []
\end{curry}
Thanks to the logic features of Curry, one can also use
this definition to generate appropriate argument values for \code{zip}.
For instance, if we evaluate the equation \code{zip$\;$xs$\;$ys$\;$==$\;$[]}
with the Curry implementation KiCS2 \cite{BrasselHanusPeemoellerReck11},
the search space is finite and computes, among others, the solution
\code{\{xs=[]\}}.

Unfortunately, this scheme does not yield the best code
to ensure optimal computations. To understand the potential problem,
consider the following operation:\label{ex-indseq}
\begin{curry}
f 0 1 = 1
f _ 2 = 2
\end{curry}
Intuitively, the best strategy to evaluate a call to \code{f}
starts with a case distinction on the second argument,
since its value determines which rule to apply.
If the value is 1, and only in this case,
the strategy checks the first argument,
since its value determines whether to apply the first rule.
A formal characterization of operations that allow this strategy
\cite{Antoy92ALP} and a discussion of the strategy itself
will be presented in Sect.~\ref{sec:replace}.
In this example, the pattern matching strategy is as follows:
\begin{enumerate}
\item Evaluate the second argument (to head normal form).
\item If its value is \code{2}, apply the second rule.
\item If its value is \code{1}, evaluate the first argument
      and try to apply the first rule.
\item Otherwise, no rule is applicable.
\end{enumerate}
In particular, if \code{loop} denotes a non-terminating operation,
the call \code{f$\;$loop$\;$2} evaluates to \code{2}.
This is in contrast to Haskell \cite{PeytonJones03Haskell}
which performs pattern matching from left to right so that
Haskell loops on this call.
This strategy, which is optimal for the class of programs
referred to as \emph{inductively sequential} \cite{Antoy92ALP}
for which it is intended,
has been extended to functional logic computations
(\emph{needed narrowing} \cite{AntoyEchahedHanus00JACM})
and to overlapping rules \cite{Antoy97ALP}
in order to cover general functional logic programs.

Now consider the following default rule for \code{f}:
\begin{curry}
f'default _ x = x
\end{curry}
If we apply our transformation scheme sketched above,
we obtain the following Curry program:
\begin{curry}
f'TEST 0 1 = ()
f'TEST _ 2 = ()$\listline$
f 0 1 = 1
f _ 2 = 2
f x y | isEmpty (f'TEST$\funset$ x y) = y
\end{curry}
As a result, the definition of \code{f} is no longer
inductively sequential since the left-hand sides of the first and
third rule overlap.
Since there is no argument demanded by all rules of \code{f},
the rules could be applied independently.
In fact, the Curry implementation KiCS2 \cite{BrasselHanusPeemoellerReck11}
loops on the call \code{f$\;$loop$\;$2} (since it tries to evaluate the first
argument in order to apply the first rule), whereas it yields
the result \code{2} without the default rule.

To avoid this undesirable behavior when adding default rules,
we could try to use the same strategy for the standard rules
and the test in the default rule.
This can be done by translating the original standard rules
into an auxiliary operation and redefining the original operation
into one that either applies the standard rules or the default rules.
For our example, we transform the definition of \code{f}
(with the default rule) into the following functions:
\begin{curry}
f'TEST 0 1 = ()
f'TEST _ 2 = ()$\listline$
f'INIT 0 1 = 1
f'INIT _ 2 = 2$\listline$
f'DFLT x y | isEmpty (f'TEST$\funset$ x y) = y$\listline$
f x y = f'INIT x y ? f'DFLT x y
\end{curry}
Now, both \code{f'TEST} and \code{f'INIT} are inductively sequential
so that the optimal needed narrowing strategy can be applied,
and \code{f} simply denotes a choice (without an argument evaluation)
between two expressions that are evaluated optimally.
Observe that at most one of these expressions is reducible.
As a result, the Curry implementation KiCS2 evaluates \code{f$\;$loop$\;$2}
to \code{2} and does not run into a loop.

The overall transformation of default rules can be
described by the following scheme (its simplicity is advantageous
to obtain a comprehensible definition of the semantics of default rules).
The operation definition
\begin{curry}
$f$ $\ol{t^1_k}$ | $c_1$ = $e_1$
$\vdots$
$f$ $\ol{t^n_k}$ | $c_n$ = $e_n$
$f$'default $\ol{t^{n+1}_k}$ | $c_{n+1}$ = $e_{n+1}$
\end{curry}
is transformed into
(where \code{$f$'TEST}, \code{$f$'INIT}, \code{$f$'DFLT} are
new operation identifiers):
\begin{curry}
$f$'TEST $\ol{t^1_k}$ | $c_1$ = ()
$\vdots$
$f$'TEST $\ol{t^n_k}$ | $c_n$ = ()

$f$'INIT $\ol{t^1_k}$ | $c_1$ = $e_1$
$\vdots$
$f$'INIT $\ol{t^n_k}$ | $c_n$ = $e_n$

$f$'DFLT $\ol{t^{n+1}_k}$ | isEmpty ($f$'TEST$\funset$ $\ol{t^{n+1}_k}$) && $c_{n+1}$ = $e_{n+1}$

$f$ $\ol{x_k}$ = $f$'INIT $\ol{x_k}$ ? $f$'DFLT $\ol{x_k}$
\end{curry}
Note that the patterns and conditions of the original rules
are not changed. Hence, this transformation is also compatible
with other advanced features of Curry, like functional patterns,
``as'' patterns, non-linear patterns, local declarations, etc.
Furthermore, if an efficient strategy exists for the 
original standard rules, the same strategy can be applied
in the presence of default rules.
This property can be formally stated as follows:
\begin{proposition}
\label{sequentiality}
Let $\Rc$ be a program without default rules, 
and $\Rc'$ be the same program except that
default rules are added to some operations of $\Rc$.
If $\Rc$ is overlapping inductively sequential, so is $\Rc'$.
\end{proposition}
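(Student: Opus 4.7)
The plan is to unfold the transformation scheme operation by operation and check that each defined function in $\Rc'$ still admits a definitional tree in the sense of \cite{Antoy92ALP,Antoy97ALP}, which is the content of overlapping inductive sequentiality. Operations of $\Rc$ to which no default rule has been added appear unchanged in $\Rc'$, so they inherit their witnessing definitional trees directly; the real work is concentrated on those operations $f$ for which $\Rc'$ contains the four-operation replacement \code{$f$'TEST}, \code{$f$'INIT}, \code{$f$'DFLT}, and the redefined $f$.

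For such an $f$, I would treat the four new operations in turn. By construction, \code{$f$'INIT} has exactly the left-hand sides, conditions, and right-hand sides of the standard rules of $f$ in $\Rc$, so any definitional tree that witnesses overlapping inductive sequentiality of $f$ in $\Rc$ also witnesses it for \code{$f$'INIT}. The operation \code{$f$'TEST} differs from those same standard rules only in that each right-hand side is uniformly replaced by \code{()}; since a definitional tree depends on the argument patterns (and, in the overlapping extension, on how mutually overlapping rules cluster at its leaves) and not on the right-hand sides, the same tree again serves as a witness. The operation \code{$f$'DFLT} consists of a single conditional rule and is therefore trivially inductively sequential. Finally, the redefined top-level $f$ is given by one rule whose arguments are all variables and whose body is a choice, so its definitional tree is a single leaf.

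The main thing to watch, rather than any deep obstacle, is the role of conditions. Because conditions are not part of the pattern matching on which the branches of a definitional tree are built, appending the extra \code{isEmpty} test on \code{$f$'TEST$\funset$} to the condition of the default rule does not disturb the shape of any tree, and neither does replacing the standard right-hand sides by \code{()} in \code{$f$'TEST}. Once that observation is made precise, the proof reduces to a direct case analysis over the four new operations and does not require any new result about definitional trees beyond those already cited.
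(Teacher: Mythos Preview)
Your proposal is correct and follows essentially the same approach as the paper's own proof: both argue operation by operation, noting that the redefined $f$ and \code{$f$'DFLT} are single-rule operations (hence trivially overlapping inductively sequential), while \code{$f$'INIT} and \code{$f$'TEST} inherit the definitional tree of the original $f$ because they share its left-hand sides. Your additional remarks on why conditions and right-hand sides do not affect the definitional tree are sound elaborations but do not change the structure of the argument.
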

\begin{proof}
Let $f$ be an operation of $\Rc$.
The only interesting case is when a default rule of $f$ is in $\Rc'$.
Operation $f$ of $\Rc$ produces four different operations of $\Rc'$:
$f$, $f\code{'DFLT}$, $f\code{'INIT}$, and $f\code{'TEST}$.
The first two are overlapping inductively sequential
since they are defined by a single rule.
The last two are overlapping inductively sequential when $f$ of $\Rc$ is
overlapping inductively sequential since they have the same definitional
tree as $f$ modulo a renaming of symbols.
\end{proof}
The above proposition could be tightened a little
when operation $f$ is non-overlap\-ping.
In this case three of the four operations produced by the
transformation are non-overlapping as well.
Prop.\,\ref{sequentiality} is important for the efficiency of computations.
In overlapping inductively sequential systems, needed redexes
exist and can be easily and efficiently computed~\cite{Antoy97ALP}.
If the original system has a
strategy that reduces only needed redexes, the transformed system has
a strategy that reduces only needed redexes.  
This ensures that optimal computations are preserved by the transformation
regardless of non-determinism.

This result is in contrast to Haskell (or Prolog),
where the concept of default rules is based on
a sequential testing of rules,
which might inhibit optimal evaluation and
prevent or limit non-determinism.
Hence, our concept of default rules is more powerful
than existing concepts in functional or logic programming
(see also Sect.~\ref{sec:related}).

We now relate values computed in the original system to those
computed in the transformed system and vice versa.
As expected, extending an operation with a default rule
preserves the values computed without the default rule.
\begin{proposition}
\label{completeness}
Let $\Rc$ be a program without default rules, 
and $\Rc'$ be the same program except that
default rules are added to some operations of $\Rc$.
If $e$ is an expression of $\Rc$ 
that evaluates to the value $t$ w.r.t.\ $\Rc$,
then $e$ evaluates to $t$ w.r.t.\ $\Rc'$.
\end{proposition}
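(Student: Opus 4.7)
The plan is to prove this by induction on the length of the derivation that evaluates $e$ to $t$ in $\Rc$, showing that every reduction step in $\Rc$ can be simulated by a short burst of steps in $\Rc'$. The only operations whose definition changes under the transformation are those $f$ to which a default rule is added; for every other operation the rules are identical in $\Rc$ and in $\Rc'$, so those steps lift trivially.

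For an operation $f$ that acquires a default rule in $\Rc'$, I would exploit the fact that the standard rules of $f$ in $\Rc$ appear verbatim (same patterns, same conditions, same right-hand sides) as the rules of the fresh symbol $f\text{'INIT}$ in $\Rc'$, and that the new rule of $f$ in $\Rc'$ is simply $f~\ol{x_k} = f\text{'INIT}~\ol{x_k}~?~f\text{'DFLT}~\ol{x_k}$. Hence each $\Rc$-step that rewrites a redex $f~\ol{s_k}$ using the $i$-th standard rule of $f$ is simulated in $\Rc'$ by three consecutive steps: (i) apply the new rule for $f$ to rewrite $f~\ol{s_k}$ to $f\text{'INIT}~\ol{s_k}~?~f\text{'DFLT}~\ol{s_k}$; (ii) apply the first rule of \code{?} (namely \code{x ? \_ = x}) to select the left branch $f\text{'INIT}~\ol{s_k}$; (iii) apply the $i$-th rule of $f\text{'INIT}$, which is identical to the one used in $\Rc$. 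Because the patterns, guards, functional patterns, non-linearities, and local declarations of the original rules are preserved by the transformation, the matcher and condition-solver behave exactly as in $\Rc$, and the same bindings of any free variables are generated.

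Iterating this simulation along the entire $\Rc$-derivation yields an $\Rc'$-derivation ending in $t$. A small lemma handles the base case and the closure under evaluation contexts: if $C[\cdot]$ is an arbitrary evaluation context and $u \to u'$ in $\Rc$, then $C[u]$ reaches $C[u']$ in $\Rc'$ by the three-step simulation performed at the redex position. Since the auxiliary symbols $f\text{'TEST}$ and $f\text{'DFLT}$ are syntactically separated from the simulating reduction by the chosen \code{?}-branch, nothing happening (or failing) inside $f\text{'DFLT}~\ol{s_k}$ — including the set-functional test $\text{isEmpty}(f\text{'TEST}\funset~\ol{s_k})$ — can affect the reduction that mimics $\Rc$.

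The main obstacle I anticipate is justifying the soundness of the "pick the left branch of \code{?}" step in the denotational setting underlying Curry's set-function semantics (cf.\ \cite{AntoyHanus09,ChristiansenHanusReckSeidel13PPDP}): in operational terms this is immediate because \code{?} is non-deterministic and the left rule is always applicable, but at the level of values one must argue that the value $t$ produced via the $f\text{'INIT}$ branch is among the values of $f\text{'INIT}~\ol{s_k}~?~f\text{'DFLT}~\ol{s_k}$ regardless of how the $f\text{'DFLT}$ branch evaluates. This follows from the fact that \code{?} contributes the union of the value sets of its two arguments, so adding the $f\text{'DFLT}$ alternative can only enlarge the set of results, never suppress a value already produced by the $f\text{'INIT}$ branch.
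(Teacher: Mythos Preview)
Your proposal is correct and follows essentially the same approach as the paper: induction on the length of the evaluation in $\Rc$, with each $f$-rooted step in $\Rc$ simulated in $\Rc'$ by routing through $f\text{'INIT}$. The paper's proof is terser---it writes the simulation as the two steps $f\;\ol{t_k} \to f\text{'INIT}\;\ol{t_k} \to u$, silently absorbing the \code{?}-selection you spell out as step (ii)---so your three-step account is in fact more precise than the paper's own sketch, and your remarks about contexts and the value-level justification of the \code{?} branch are additional care the paper does not bother with.
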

\begin{proof}
Let $f\;\ol{t_k} \to u$ w.r.t.\ $\Rc$, for some expression $u$,
a step of the evaluation of $e$.  The only interesting case is when
a default rule of $f$ is in $\Rc'$.
By the definitions of $f$ and $f\code{'INIT}$ in $\Rc'$,
$f\;\ol{t_k} \to f\code{'INIT}\;\ol{t_k} \to u$ w.r.t.\ $\Rc'$.
A trivial induction on the length of the evaluation of $e$
completes the proof.
\end{proof}
The converse of Prop.~\ref{completeness} does not hold because
$\Rc'$ typically computes more values than $\Rc$---that is the reason
why there are default rules.
The following statement relates values computed in $\Rc'$ to
values computed in $\Rc$.
\begin{proposition}
\label{soundness}
Let $\Rc$ be a program without default rules, 
and $\Rc'$ be the same program except that
default rules are added to some operations of $\Rc$.
If $e$ is an expression of $\Rc$ 
that evaluates to the value $t$ w.r.t.\ $\Rc'$,
then either $e$ evaluates to $t$ w.r.t.\ $\Rc$ or some default
rule of $\Rc'$ is applied in $e \tostar t$ in $\Rc'$.
\end{proposition}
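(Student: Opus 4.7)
The plan is to mirror the proof of Proposition~\ref{completeness} in the opposite direction, by induction on the length of the reduction $e \tostar t$ in $\Rc'$. The base case of length zero is trivial, since then $e = t$ is already a value and $e \tostar t$ holds in $\Rc$ vacuously. For the inductive step, observe that if any step of the reduction $e \tostar t$ applies an $f\code{'DFLT}$ rule (for some transformed $f$), then the second disjunct of the conclusion holds directly; so one may assume throughout that no default rule is ever applied, and the task reduces to showing $e \tostar t$ in $\Rc$.

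Under this assumption, every step of the $\Rc'$-reduction belongs to one of four categories, to be handled by case analysis on the rule applied. (i) If the rule belongs to an operation unchanged by the transformation (including $?$, constructors, and operations without default rules), the step is already a legal $\Rc$-step. (ii) If the rule is the dispatcher $f\;\ol{x_k} \to f\code{'INIT}\;\ol{x_k}\;?\;f\code{'DFLT}\;\ol{x_k}$ for a transformed $f$, the step is administrative; since $t$ is a constructor value containing neither $?$ nor $f\code{'DFLT}$, the introduced $?$ must be resolved later, and a resolution to the right branch would require a subsequent $f\code{'DFLT}$-rule application, contradicting the no-default assumption, so the resolution is toward the left branch $f\code{'INIT}\;\ol{s}$. (iii) If the rule is one of $f\code{'INIT}$, it has the same left- and right-hand sides as the corresponding standard rule of $f$ in $\Rc$, so applying it mirrors an $\Rc$-step. (iv) Steps applying an $f\code{'TEST}$ rule are excluded, because $f\code{'TEST}$ occurs only inside the set-function expression in the condition of $f\code{'DFLT}$, whose evaluation is needed only in order to fire $f\code{'DFLT}$ — again excluded. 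Consequently, each dispatcher step is eventually followed by a choice-left resolution and an $f\code{'INIT}$-rule application, and this triple jointly simulates the single $\Rc$-step $f\;\ol{s} \to e_i[\ol{x_k} \mapsto \ol{s}]$ corresponding to whichever standard rule of $f$ the $f\code{'INIT}$ rule mirrors.

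To turn this correspondence into a proof I would define a projection $\phi$ from intermediate $\Rc'$-expressions to $\Rc$-expressions that replaces every $f\code{'INIT}$ by $f$ and every subterm of the form $f\code{'INIT}\;\ol{s}\;?\;f\code{'DFLT}\;\ol{s}$ by $f\;\ol{\phi(s)}$, then verify that $\phi(e) = e$, $\phi(t) = t$, that each step of type (i) or (iii) induces a corresponding $\Rc$-step on the projection, and that each step of type (ii) leaves the projection unchanged. The main technical obstacle will be the interleaving of administrative steps with genuine reductions in the arguments: dispatcher, choice-left, and $f\code{'INIT}$-rule applications need not occur contiguously in the $\Rc'$-trace. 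I expect to handle this either by permuting steps to make each such triple contiguous — invoking standard residual/diamond arguments available for overlapping inductively sequential systems — or by arguing that $\phi$ is well defined despite persisting $f\code{'DFLT}$ subterms, using the fact that under the no-default assumption they are always eventually discarded by their associated choice resolution and so cannot contribute to $t$. A straightforward induction on the length of the reduction then completes the argument.
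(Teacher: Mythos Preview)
Your approach is essentially the paper's: assume no default rule fires, then project the $\Rc'$-derivation to an $\Rc$-derivation by erasing the dispatcher steps and renaming $f\code{'INIT}$ back to $f$. The paper's proof is far terser than yours---it treats the dispatcher-plus-choice as a single step $f\;\ol{t_k} \to f\code{'INIT}\;\ol{t_k}$ and simply says ``remove those steps and rename''---so the interleaving and $?$-resolution issues you carefully flag are glossed over there rather than resolved; your projection $\phi$ and the residual/permutation argument make explicit what the paper leaves implicit, but the underlying strategy is the same.
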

\begin{proof}
Let $A$ denote an evaluation $e \tostar t$ in $\Rc'$
that never applies default rules.
For any operation $f$ of $\Rc$,
the steps of $A$ are of two kinds: (1)
$f\;\ol{t_k} \to f\code{'INIT}\;\ol{t_k}$
or (2) $f\code{'INIT}\;\ol{t_k} \to t'$, for
some expressions $\ol{t_k}$ and $t'$.
If we remove from $A$ the steps of kind (1) and
replace $f\code{'INIT}$ with $f$, we obtain an evaluation
of $e$ to $t$ in $\Rc$.
\end{proof}
In Curry, by design, the textual order of the rules is irrelevant.
A default rule is a constructive alternative to a certain kind of failure.
For these reasons, a single default rule, as opposed to
multiple default rules without any order,
is conceptually simpler and adequate in practical situations.
Nevertheless, a default rule of an operation $f$
may invoke an auxiliary operation with multiple ordinary rules, thus,
producing the same behavior of multiple default rules of $f$.

%%%%%%%%%%%%%%%%%%%%%%%%%%%%%%%%%%%%%%%%%%%%%%%%%%%%%%%%%%%%%%%%%%%%%%%%
\section{Implementation}
\label{sec:impl}

The implementation of default rules for Curry based
on the transformational approach
is available as a preprocessor.
The preprocessor is integrated into the compilation
chain of the Curry systems
PAKCS \cite{Hanus16PAKCS} and KiCS2 \cite{BrasselHanusPeemoellerReck11}.
In some future version of Curry, one could also add a specific
syntax for default rules and transform them in the front end
of the Curry system.

The transformation scheme shown in the previous section
is mainly intended to specify the precise meaning of default rules
(similarly to the specification of the meaning of guards
in Haskell \cite{PeytonJones03Haskell}).
Although this transformation scheme leads to
a reasonably efficient implementation,
the actual implementation can be improved in various ways.
In the following, we present two approaches to improve
the implementation of default rules.

\subsection{Avoiding Duplicated Condition Checking}
\label{sec:nodupimpl}

Our transformation scheme for default rules
generates from a set of standard rules
the auxiliary operations \code{$f$'TEST} and \code{$f$'INIT}.
\code{$f$'TEST} is used in the condition of the translated default rule
to check the applicability of a standard rule, whereas \code{$f$'INIT}
actually applies a standard rule.
Since both alternatives (standard rules or default rule)
are eventually tried for application, the pattern matching and
condition checking of some standard rule might be duplicated.
For instance, if a standard rule is applicable to some call
and the same call matches the pattern of the default rule,
it might be tried twice: (1) the standard rule is applied
by \code{$f$'INIT}, and (2) its pattern and condition is tested
by \code{$f$'TEST} in order to test the \mbox{(non-)emptiness} of
the set of all results. Although the amount of duplicated
work is difficult to assess accurately due to Curry's lazy evaluation strategy
(e.g., to check the non-emptiness in the condition of \code{$f$'DFLT},
it suffices to compute at most one element of the set),
there is some risk for operationally complex conditions or
patterns, e.g., functional patterns.

This kind of duplicated work can be avoided by a more sophisticated
transformation scheme
where the common parts of the definitions of \code{$f$'TEST}
and \code{$f$'INIT} are joined into a single operation.
This operation first tests the application of a standard rule
and, in case of a successful test, returns a continuation
to proceed with the corresponding rule.
For instance, consider the rules for \code{zip} presented in
Example~\ref{ex:zipdefault}.
The operations \code{zip'TEST} and \code{zip'INIT}
generated by our first transformation scheme can be joined
into a single operation \code{zip'TESTC} by the following transformation:
\begin{curry}
zip'TESTC (x:xs) (y:ys) = \_$~$-> (x,y) : zip xs ys

zip'DFLT _ _ = []

zip xs ys = let cs = zip'TESTC$\funset$ xs ys
             in if isEmpty cs then zip'DFLT xs ys
                              else (chooseValue cs) ()
\end{curry}
Now, the standard rule is translated into a rule
for the new operation \code{zip'TESTC} where the rule's right-hand side
is encapsulated into a lambda abstraction to avoid its immediate
evaluation if this rule is applied.
The actual implementation of \code{zip} first checks whether the set of all
such lambda abstractions is empty. If this is the case,
the standard rule is not applicable so that the default rule is applied.
Otherwise, we continue with the right-hand sides of
all applicable standard rules collected as lambda abstractions
in the set \code{cs}.\footnote{%
The operation \code{chooseValue}
non-deterministically chooses some value of the given set.}

The general transformation scheme to obtain this
behavior is defined as follows.
An operation definition of the form
\begin{curry}
$f$ $\ol{t^1_k}$ | $c_1$ = $e_1$
$\vdots$
$f$ $\ol{t^n_k}$ | $c_n$ = $e_n$
$f$'default $\ol{t^{n+1}_k}$ | $c_{n+1}$ = $e_{n+1}$
\end{curry}
is transformed into:
\begin{curry}
$f$'TESTC $\ol{t^1_k}$ | $c_1$ = \_$~$-> $e_1$
$\vdots$
$f$'TESTC $\ol{t^n_k}$ | $c_n$ = \_$~$-> $e_n$

$f$'DFLT $\ol{t^{n+1}_k}$ | $c_{n+1}$ = $e_{n+1}$

$f$ $\ol{x_k}$ = let cs = $f$'TESTC$\funset$ $\ol{x_k}$
        in if isEmpty cs then $f$'DFLT $\ol{x_k}$
                         else (chooseValue cs) ()
\end{curry}
Obviously, this modified scheme avoids the potentially duplicated
condition checking in standard rules, but it is more sophisticated
since it requires the handling of sets of continuations.
Depending on the implementation of set functions,
this might be impossible if the values are operations.
If the results computed by set functions
are actually sets (and not multi-sets), this scheme
cannot be applied since sets require an equality operation
on elements in order to eliminate duplicated elements.

Fortunately, this scheme is applicable with PAKCS \cite{Hanus16PAKCS},
which computes multi-sets as results of set functions so that
it does not require equality on elements.
Thus, we compare the run times of both schemes
for some of the operations shown above which contain
complex applicability conditions (functional patterns).
All benchmarks were executed on a Linux machine (Debian Jessie)
with an Intel Core i7-4790 (3.60Ghz) processor and 8GB of memory.
Figure~\ref{nodup-benchmark} shows the run times (in seconds)
to evaluate some operations with both schemes.
These benchmarks indicate that the new scheme might yield
a reasonable performance gain, although this clearly depends
on the particular example.
A further alternative transformation scheme is discussed
in the following section.

\begin{figure}[h]
\vspace*{-2ex}
\begin{center}
\begin{tabular}{lccccc}
\hline
System: & \multicolumn{5}{c}{PAKCS 1.14.0 \cite{Hanus16PAKCS}} \\
\hline
Operation:  & \code{isSet} & \code{isSet}   & \code{lookup} & \code{lookup}         & \code{queens} \\
Arguments:  & [1..1000]    & [1000,1..1000] & 5001              & 5000              & 6             \\
            &              &                & [(1,.)..(5000,.)] & [(1,.)..(5000,.)] &               \\
\hline
Sect.~\ref{sec:transsem}:  & 7.04 & 4.78    & 3.56              & 3.54              & 0.23 \\
\hline
Sect.~\ref{sec:nodupimpl}: & 2.27 & 2.28    & 1.81              & 3.57              & 0.23 \\
\hline
\end{tabular}
\end{center}
\vspace*{-2ex}
\caption{\label{nodup-benchmark}
Performance comparison of different transformation schemes.
}
\end{figure}

\subsection{Transforming Default Rules into Standard Rules}
\label{sec:replace}

In some situations, the behavior of a default rule can be 
provided by a set of standard rules.
Almost universally, standard rules are more efficient.
An example of this situation is provided with the operation \code{zip}.
In Example~\ref{ex:zipdefault} this operation
is defined with a default rule.
A definition using standard rules is shown 
at the beginning of Sect.~\ref{sec:concept}.
The input/output relations of the two definitions are identical.
In this section, we introduce a few concepts to describe how to obtain,
under sufficient conditions,
a set of standard rules that behave as a default rule.

The programs considered in this section are
constructor-based \cite{ODonnell77}
(the extension to functional patterns is discussed later).
Thus, there are disjoint sets of \emph{operation} symbols,
denoted by $f,g,\ldots$,
and \emph{constructor} symbols, denoted by $c, d,\ldots$
An \emph{$f$-rooted pattern} is an expression of the form $f\,\ol{t_n}$
where $f$ is an operation symbol of arity $n$, each $t_i$
is an expression consisting of variables and/or constructor symbols only,
and $f\,\ol{t_n}$ is \emph{linear},
i.e., there are no repeated occurrences of some variable.
A \emph{pattern} is an $f$-rooted pattern for some operation $f$.
A pattern is \emph{ground} if it does not contain any variable.
A \emph{program rule} has the form $l = r$ where the left-hand side $l$
is a pattern (the extension to conditional rules is discussed later).
Given a redex $t$ and a step $t \to u$, $u$ is called a
\emph{contractum} (of $t$).
Although Curry allows non-linear patterns for
the convenience of the programmer, they are transformed into
linear ones through a simple syntactic transformation.

In the following, we first consider a specific class of programs,
called \emph{inductively sequential}, where the rules
of each operation can be organized in a \emph{definitional tree}
\cite{Antoy92ALP}.
\begin{definition}[Definitional tree]
\label{def:deftree}
The symbols \emph{rule}, \emph{exempt}, and \emph{branch}, 
appearing below, are uninterpreted functions for classifying
the nodes of a tree.
A \emph{partial definitional tree} with an $f$-rooted pattern $p$
is either a rule node $rule(p = r)$, an exempt node $exempt(p)$,
or a branch node $branch(p,x,\ol{\tree_k})$, where $x$ is a variable in $p$
(also called the \emph{inductive variable}),
$\{c_1,\ldots,c_k\}$ is the set of all the constructors of the type of $x$,
the substitution $\sigma_i$ maps $x$ to $c_i\,\ol{x_{a_i}}$
(where $\ol{x_{a_i}}$ are all fresh variables and $a_i$ is the arity of $c_i$),
and $\tree_i$ is a partial definitional tree with pattern $\sigma_i(p)$
(for $i=1,\ldots,k$).
A \emph{definitional tree} $\tree$ of an operation $f$
is a finite partial definitional tree with pattern $f\,\ol{x_n}$,
where $n$ is the arity of $f$ and $\ol{x_n}$ are pairwise different
variables, such that $\tree$ contains all and only the rules defining $f$
(up to variable renaming).
In this case, we call $f$ \emph{inductively sequential}.
\end{definition}
Definitional trees have a comprehensible graphical representation.
For instance, the definitional tree of the operation \ccode{++}
defined in Example~\ref{ex-conc} is shown in Fig.~\ref{fig:conctree}.
In this graphical representation, the pattern of each node is shown.
The root node is a \emph{branch} and its children are \emph{rule} nodes.
The inductive variable of the \emph{branch} is the left operand of
``\code{++}''. Referring to Def.~\ref{def:deftree},
$\sigma_1$ maps this variable to ``\code{[]}''
and $\sigma_2$ to $(x:xs)$.
For rule nodes, the right-hand side of the rule is shown
below the arrow. Exempt nodes are marked by the keyword \emph{exempt},
as shown in Figures~\ref{non-minimal} and~\ref{deftree}.
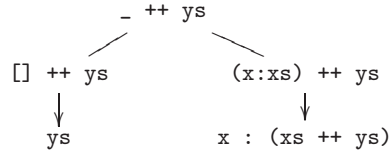
\begin{figure}[ht]
\hspace{\fill}
\xymatrix@C=-2pt@R=10pt{
& \code{\us{} ++ ys} \ar@{-}[dl] \ar@{-}[dr] \\
\code{[] ++ ys} \ar@{->}[d] & & 
    \code{(x:xs) ++ ys} \ar@{->}[d] \\
\code{ys} &  & \code{x : (xs ++ ys)} &
}
\hspace{\fill}
\caption{\label{fig:conctree}
A definitional tree of the operation \ccode{++}}.
\end{figure}

\noindent
For the sake of completeness, we sketch how definitional trees
are used by the evaluation strategy.
The details can be found in \cite{Antoy97ALP}.
We discuss how to compute a rewrite 
of an expression rooted by an operation.
More general cases are reduced to that.
It can be shown that any step so computed is needed.
Thus, let $t$ be an expression rooted by an operation $f$
and $\tree$ a definitional tree of $f$.
A traversal of $\tree$ finds the deepest node ${\cal N}$ in $\tree$ 
whose pattern $p$ matches $t$.  
Such a node, and pattern, exist for every $t$.
If ${\cal N}$ is a \emph{rule} node, then $t$ is a redex and is reduced.
If ${\cal N}$ is an \emph{exempt} node, then the computation
is aborted because $t$ has no value as in, e.g., \code{head []},
the head of an empty list.
If ${\cal N}$ is a \emph{branch} node, then the match of the inductive
variable of $p$ is an expression $t'$ rooted by some operation
and the strategy recursively seeks to compute a step of $t'$.

Before presenting our transformation,
we state an important property of definitional trees.
\begin{definition}[Mutually exclusive and exhaustive patterns]
\label{ex-ex}
Let $f$ be an operation symbol and $S$ a set of $f$-rooted patterns.
We say that the patterns of $S$ are \emph{mutually exclusive}
iff for any ground $f$-rooted pattern $p$, no two distinct patterns
of $S$ match $p$, and we say that the patterns of $S$ are 
\emph{exhaustive} iff for any ground $f$-rooted pattern $p$, 
there exists a pattern in $S$ that matches $p$.
\end{definition}
\begin{lemma}[Uniqueness]
\label{uniqueness}
Let $f$ be an operation defined by a set of standard rules.
If $\tree$ is a definitional tree of the rules of $f$,
then the patterns in the leaves of $\tree$ are exhaustive
and mutually exclusive.
\end{lemma}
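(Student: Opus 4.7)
The plan is to prove a strengthened statement by structural induction on partial definitional trees, then specialize to full definitional trees. Specifically, I will prove: for any partial definitional tree $\tree$ with pattern $p$, the set of patterns appearing at the leaves of $\tree$ is mutually exclusive, and every ground instance of $p$ is matched by exactly one such leaf pattern. The lemma then follows because a definitional tree of $f$ has root pattern $f\,\ol{x_n}$ with $\ol{x_n}$ pairwise different variables, so its ground instances are precisely all ground $f$-rooted patterns.

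For the base cases $\tree = rule(p = r)$ and $\tree = exempt(p)$, the only leaf is $p$ itself, and the statement is immediate: a single pattern is trivially mutually exclusive with itself, and its set of ground instances is, tautologically, the set of ground instances of $p$.

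The inductive case $\tree = branch(p, x, \ol{\tree_k})$ is where the real content lies. By Def.~\ref{def:deftree}, $\{c_1,\ldots,c_k\}$ is the full set of constructors of the type of $x$, and $\tree_i$ is a partial definitional tree with pattern $\sigma_i(p)$, where $\sigma_i$ binds $x$ to $c_i\,\ol{x_{a_i}}$ on fresh variables. Every ground instance of $p$ has, at the position of $x$, a ground constructor term whose outermost symbol is some $c_j$, and exactly one such $j$; hence the ground instances of $\sigma_1(p),\ldots,\sigma_k(p)$ partition those of $p$. By the induction hypothesis, the leaves of each $\tree_i$ partition the ground instances of $\sigma_i(p)$, and gluing these partitions together yields a partition of the ground instances of $p$ by the leaves of $\tree$. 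Mutual exclusivity across different subtrees follows because their patterns disagree already at position $x$, and within a subtree it follows from the induction hypothesis.

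The only potentially subtle point is keeping bookkeeping on variables honest: the fresh variables introduced by each $\sigma_i$ ensure that the leaf patterns remain linear and that the induction hypothesis applies to each subtree with a well-formed pattern. Once this is clear, the argument is a clean structural induction with no case analysis beyond the three node kinds, and the statement of the lemma is obtained by instantiating $p = f\,\ol{x_n}$ at the root.
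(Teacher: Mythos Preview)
Your proof is correct. The paper takes a closely related but differently organized route: rather than performing a structural induction on partial definitional trees that builds up a partition property, the paper fixes an arbitrary ground $f$-rooted pattern $p$ and shows that at any non-leaf node $N$ whose pattern matches $p$, exactly one child of $N$ has a pattern matching $p$; an induction on the depth of $\tree$ then traces the unique root-to-leaf path for $p$. Both arguments rest on the same key observation---that at a branch node the constructors $c_1,\ldots,c_k$ exhaust the type of the inductive variable, so exactly one $\sigma_i(\pattern)$ can match any given ground instance---and the difference is largely one of packaging. Your strengthened statement about arbitrary partial trees is slightly more general and makes the partition structure explicit, while the paper's path-tracing formulation stays closer to how definitional trees are actually traversed by the evaluation strategy described just after Def.~\ref{def:deftree}.
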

\begin{proof}
Let $p$ be any ground $f$-rooted pattern,
$\pattern$ the pattern in a node $N$ of $\tree$,
and suppose that $\pattern$ matches $p$.
Initially, we show that
if $N$ is not a leaf of $\tree$,
there is exactly one child $N'$ of $N$
such the pattern $\pattern'$ of $N'$ matches $p$.
Let $x$ be the inductive variable of $\pattern$
and $q$ the subexpression of $p$ matched by $x$.
Since $p$ is ground and $q$ is a proper subexpression of $p$,
$q$ is rooted by some constructor symbol $c$.
Let $\{c_1,\ldots c_k\}$ be the set of all the constructors
of the type defining $c$ and 
let $a_i$ be the arity of $c_i$, for all appropriate $i$.
By Def.~\ref{def:deftree},
$N$ has $k$ children with patterns $\sigma_i(\pattern)$,
where $\sigma_i=\{x \mapsto c_i\,\ol x_{a_i}\}$ and
$\ol x_{a_i}$ is a fresh variable, for all appropriate $i$.
Hence, exactly one of these patterns matches $p$
since $c_i\,\ol x_{a_i}$ matches $q$ iff $c_i = c$.
Going back to the proposition's claim, since the pattern in the root
of $\tree$ matches $p$, by induction on the depth of $\tree$,
there is exactly one leaf whose pattern matches $p$.
\end{proof}
Inductive sequentiality is sufficient, but not necessary for
a set of exhaustive and mutually exclusive patterns.
We will later show a non-inductively sequential operation with 
exhaustive and mutually exclusive patterns.
Nevertheless, inductive sequentiality supports a constructive method
to transform default rules.
Since not every definitional tree is useful to define
our transformation, we first restrict the set of definitional trees.
\begin{definition}[Minimal definitional tree]
\label{def-minimal-tree}
A definitional tree is \emph{minimal} iff there is
some \emph{rule} node below any \emph{branch} node of the tree.
\end{definition}
For example, consider the operation \code{isEmpty} defined by the single rule
\begin{curry}
isEmpty [] = True
\end{curry}
Fig.~\ref{non-minimal} shows a non-minimal tree of 
the rules defining \code{isEmpty}.
The right child of the root is a \emph{branch} node
that has no \emph{rule} node below it.
In a minimal tree of the rules defining \code{isEmpty},
the right child would be an \emph{exempt} node.
\newcommand{\exempt}[1]{%
\vcenter{
\hbox{#1}
\hbox{(\emph{exempt})}
}}

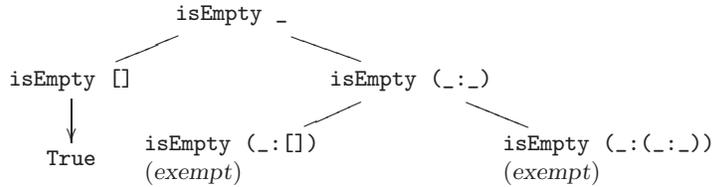
\begin{figure}[ht]
\hspace{\fill}
\xymatrix@C=-2pt@R=10pt{
& \code{isEmpty \us} \ar@{-}[dl] \ar@{-}[dr] \\
\code{isEmpty []} \ar@{->}[d] & & 
    \code{isEmpty (\us:\us)} \ar@{-}[dl] \ar@{-}[dr] \\
\code{True} & \exempt{\code{isEmpty (\us:[])}} & &
    \exempt{\code{isEmpty (\us:(\us:\us))}}
}
\hspace{\fill}
\caption{\label{non-minimal} A non-minimal
definitional tree of the operation \code{isEmpty}}.
\end{figure}

\noindent
We now investigate sufficient conditions for the equivalence
between an operation defined with a default rule and
an operation defined by standard rules only.

\begin{definition}[Replacement of a default rule]
\label{replacement}
Let $f$ be an operation defined by a set of standard rules and a
default rule $f\,\ol{x_k} = t$, 
where $\ol{x_k}$ are pairwise different variables
and $t$ some expression,
and let $\tree$ be a minimal definitional tree
of the standard rules of $f$.
Let $N_1, N_2, \ldots N_n$ be the exempt nodes of $\tree$,
$\ol{t^i_k}$ the pattern of node $N_i$ and 
$\sigma_i$ the substitution $\{\ol{x_k} \mapsto \ol{t^i_k}\}$,
for $1 \leqslant i \leqslant n$.
The following set of standard rules of $f$
is called a \emph{replacement} of the default rule of $f$:
\begin{equation}
\label{repl-rule}
\sigma_i(f~\ol{x_k} = t),    \hspace{12em} \mbox{for } 1 \leqslant i \leqslant n
\end{equation}
\end{definition}
Fig. \ref{deftree} shows a minimal definitional
tree of the single standard rule of operation \code{zip} defined 
at the beginning of Sect. \ref{sec:concept}.
The right-most leaf of this tree holds this rule.
Since this leaf is below both branch nodes,
the definitional tree is minimal according to Def.~\ref{def-minimal-tree}.
The remaining two leaves hold the patterns that
match all and only the combinations of arguments
to which the default rule would be applicable.
These patterns are more instantiated than that of the default rule,
but we will see that any expression reduced by these rules
does not need any additional evaluation with respect to the default
rule.

\begin{figure}[ht]
\hspace{\fill}
\xymatrix@C=-2pt@R=10pt{
& \code{zip \us{} \us{}} \ar@{-}[dl] \ar@{-}[dr] \\
\vbox{\hbox{\code{zip [] \us{}}}
      \hbox{~(\emph{exempt})}} & & \code{zip (x:xs) \us{}} \ar@{-}[dl] \ar@{-}[dr] \\
& \vbox{\hbox{\code{zip (x:xs) []}}
        \hbox{~~~~(\emph{exempt})}} & & \code{zip (x:xs) (y:ys)} \ar@{->}[d]\\
& & & \code{(x,y) : zip xs ys}
}
\hspace{\fill}
\caption{\label{deftree} A definitional tree of the standard rule of
operation \code{zip} defined in Sect. \protect{\ref{sec:concept}}}.
\end{figure}
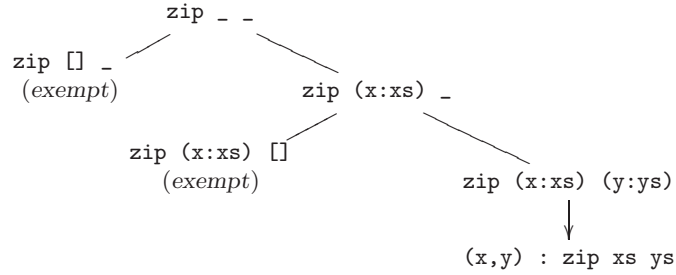

\begin{lemma}[Correctness]
\label{correctness}
Let $f$ be an operation defined by a set $S$ of standard rules and a
default rule $r$ of the form $f\,\ol{x_k} = t$, 
where each $x_i$ is a variable, for all appropriate $i$,
and some expression $t$,
and let $R$ be the replacement of $r$.
For any ground $f$-rooted pattern $p$,
$p$ is reduced at the root to some $q$ by the default rule $r$ iff
$p$ is reduced at the root to $q$ by some rule of $R$.
\end{lemma}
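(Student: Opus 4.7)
The plan is to prove the ``iff'' by splitting into the two directions and leaning crucially on Lemma~\ref{uniqueness}. Throughout, the key observation is that a standard rule of $f$ is applicable to a ground $f$-rooted $p$ exactly when $p$ matches the pattern of some \emph{rule} leaf of $\tree$, since by Def.~\ref{def:deftree} the pattern at each rule leaf of a definitional tree coincides with the left-hand side of the corresponding rule (up to renaming). Combined with Lemma~\ref{uniqueness}, which says the leaf patterns of $\tree$ are exhaustive and mutually exclusive, this means every ground $f$-rooted $p$ matches exactly one leaf, and this leaf is exempt iff no standard rule of $f$ applies to $p$.

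For the forward direction, assume $p$ is reduced at the root by the default rule $r$ to some $q$. By the informal semantics of default rules, $r$ applies only when no standard rule does, so the unique leaf of $\tree$ whose pattern matches $p$ must be some exempt node $N_i$ with pattern $\ol{t^i_k}$. Then the replacement rule $\sigma_i(f\,\ol{x_k} = t)$, whose left-hand side is $f\,\ol{t^i_k}$, also matches $p$. It remains to check that the contracta agree: if $p = f\,\ol{p_k}$ and $\tau$ is the matcher sending $\ol{t^i_k}$ to $\ol{p_k}$, then applying $r$ yields $t[\ol{x_k}\mapsto\ol{p_k}]$, while applying the replacement rule yields $\tau(\sigma_i(t)) = t[\ol{x_k}\mapsto\tau(\ol{t^i_k})] = t[\ol{x_k}\mapsto\ol{p_k}]$, so both reduce $p$ to the same $q$.

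For the converse, assume $p$ is reduced at the root to $q$ by some rule $\sigma_i(f\,\ol{x_k}=t)$ in $R$. Then $p$ matches the pattern $\ol{t^i_k}$ at the exempt leaf $N_i$ of $\tree$. By Lemma~\ref{uniqueness}, $p$ does not match the pattern at any other leaf, in particular at any rule leaf, so no standard rule of $f$ is applicable to $p$. Hence the default rule $r$ is the one that applies, and the substitution calculation above shows it produces exactly the same contractum $q$.

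I expect no serious obstacle here; the argument is essentially bookkeeping once Lemma~\ref{uniqueness} is invoked. The only slightly delicate point is making the contractum equality between $r$ and its replacement $\sigma_i(r)$ fully rigorous via composition of substitutions, and being explicit that Def.~\ref{replacement}'s requirement that $\tree$ be \emph{minimal} is not needed for correctness (it only affects how many replacement rules are produced), whereas exhaustiveness and mutual exclusivity of leaf patterns are exactly what drives both directions.
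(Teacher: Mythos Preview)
Your proposal is correct and follows essentially the same approach as the paper: both proofs hinge on Lemma~\ref{uniqueness} to conclude that the leaf patterns of $\tree$ (equivalently, the left-hand sides of the rules in $S\cup R$) are exhaustive and mutually exclusive, so that a ground $f$-rooted pattern $p$ is reducible by some rule of $R$ exactly when it is not reducible by any rule of $S$, i.e., exactly when the default rule applies; and both verify equality of contracta via the composition $\sigma=\sigma'\circ\sigma_i$ of the matching substitutions. Your remark that minimality of $\tree$ plays no role in this lemma is also made explicitly in the paper immediately after its proof.
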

\begin{proof}
The proof is done in two steps.  First, we prove that
$p$ is reduced by $r$ iff $p$ is reduced by some rule of $R$.
Then we prove that the contracta by the two rules are the same.
By Lemma~\ref{uniqueness}, the patterns in the rules of $S \cup R$ are 
exhaustive and mutually exclusive.
Therefore, $p$ is reduced by $r$ if and only if 
$p$ is not reduced by any rule of $S$ if and only if
$p$ is reduced by some rule of $R$.
We now prove the equality of the contracta.
In the remainder of this proof, all the substitutions are
restricted to $\ol{x_k}$, the argument variables of $r$.
If $p$ is reduced by $r$ with some match $\sigma$, 
then $p=\sigma(f\,\ol{x_k})$ and $q=\sigma(t)$.
Pattern $p$ is also reduced by some rule of $R$ which,
by Def.~\ref{replacement}, is of the form $\sigma_i(r)$,
for some substitution $\sigma_i$.
Consequently, $p=\sigma'(\sigma_i(f\,\ol{x_k}))$ for some match $\sigma'$.
Since $p$ is ground, $\sigma=\sigma' \circ \sigma_i$.
Thus, the contractum of $p$ by the rule of $R$ is
$\sigma'(\sigma_i(t))=\sigma(t)=q$.
\end{proof}
In Def.~\ref{replacement}, the replacement of a default rule is
constructed for a minimal definitional tree.  The hypothesis of
minimality is not used in the proof of Lemma~\ref{correctness}.
The reason is that the lemma claims a property of $f$-rooted ground patterns.
During the execution of a program, the default rule
may be applied to some $f$-rooted expression $e$ that may neither be a
pattern nor ground.  The hypothesis of minimality ensures that,
in this case, no additional evaluation of $e$ is required when
a replacement rule is applied instead of the default rule.
This fact is counter intuitive and non-trivial
since the pattern of the default rule
matches any $f$-rooted expression, whereas the patterns in the replacement
rules do not, except in the degenerate case in which the set of
standard rules is empty.
However, a default rule is
applicable only if no standard rule is applicable.  Therefore,
expression $e$ must have been evaluated ``enough'' to determine
that no standard rule is applicable.
The following lemma shows that
this evaluation is just right for the application of a replacement rule.

\begin{lemma}[Evaluation]
\label{evaluation}
Let $e$ be an $f$-rooted expression reduced by the default rule of $f$
according to the transformational semantics of Sect.~\ref{sec:transsem}.
Let $\tree$ be a minimal definitional tree of (the standard rules of) $f$.
There exists an \emph{exempt} node of $\tree$ whose pattern matches $e$.
\end{lemma}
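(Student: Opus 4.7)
The plan is to use the transformational semantics to link the default rule's firing to the emptiness of $f$'TEST$\funset\,\ol{e_k}$, and then analyze how the needed-narrowing strategy driven by $\tree$ can yield an empty set. By Sect.~\ref{sec:transsem}, the default rule of $f$ is applied to $e = f\,\ol{e_k}$ only when the guard \code{isEmpty ($f$'TEST$\funset$ $\ol{e_k}$)} reduces to \code{True}; since $f$'TEST is defined by the same left-hand sides as the standard rules of $f$ (only right-hand sides are replaced by \ccode{()}), the tree $\tree$ is simultaneously a minimal definitional tree for $f$'TEST. Thus the proof reduces to analyzing how the needed-narrowing traversal of $\tree$ for $f$'TEST$\funset\,\ol{e_k}$ can produce the empty set.

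The heart of the argument is that the strategy, applied to $f$'TEST$\funset\,\ol{e_k}$, must halt at some leaf of $\tree$ for every narrowing alternative. I would case-split on the kind of leaf reached. No alternative can halt at a \emph{rule} node: otherwise $f$'TEST$\funset\,\ol{e_k}$ would contain \ccode{()} and the guard would reduce to \code{False}, contradicting that the default rule was applied. Nor can the strategy be stuck at a \emph{branch} node: if it were, the strategy would demand further evaluation at the inductive position of $e$, and either that evaluation yields a constructor (so the traversal continues past the branch to a deeper node, so this is not where the traversal stopped) or it fails or diverges (in which case the guard cannot cleanly reduce to \code{True}, again contradicting that the default rule fired). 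Hence the traversal halts at an \emph{exempt} node of $\tree$, whose pattern by construction matches $e$.

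The main obstacle is making ``the strategy halts at an exempt node'' precise when $e$'s arguments contain free variables or \code{?}-choices, because then $f$'TEST$\funset\,\ol{e_k}$ encapsulates an entire tree of narrowing alternatives rather than a single linear path. I would handle this by appealing to the strategy-independent denotational semantics of set functions \cite{ChristiansenHanusReckSeidel13PPDP}: the encapsulated set is empty iff every narrowing alternative avoids every rule node, which by the case analysis above forces at least one alternative to terminate at an exempt node of $\tree$ whose pattern matches $e$. Minimality of $\tree$ enters to guarantee that this exempt node is a genuine leaf rather than a collapsible branch with only exempt descendants, so the matching pattern we extract is as general as possible and is indeed located in $\tree$ as stated.
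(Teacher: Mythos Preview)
Your overall plan matches the paper's: both observe that $\tree$ serves as a definitional tree of \code{$f$'TEST} and then argue that the descent through $\tree$ cannot terminate at a \emph{rule} leaf (the set would be non-empty) and cannot be blocked at a \emph{branch} node, hence it reaches an \emph{exempt} leaf whose pattern matches $e$. The paper casts this as an explicit path $N_0,N_1,\ldots,N_p$ maintained under the invariant that the pattern of each $N_i$ unifies with $e$, and handles the free-variable case directly by noting that when the inductive position of $e$ holds a variable, any child of the current branch preserves the invariant---more direct than your detour through the denotational semantics of set functions.

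Where your write-up slips is the role of minimality. You place it at the very end, to guarantee that ``the exempt node is a genuine leaf rather than a collapsible branch with only exempt descendants''; but by Def.~\ref{def:deftree} every exempt node is already a leaf, so this is vacuous. Minimality is in fact doing the work \emph{inside} your branch-node case, and you rely on it there without naming it. The reason the subterm of $e$ at the inductive position of a branch $N_k$ must already be constructor-rooted (so the descent can continue) is that minimality guarantees a \emph{rule} node somewhere below $N_k$ (Def.~\ref{def-minimal-tree}); the pattern of that rule has a constructor at the inductive position, so until $e$ exposes a constructor there one cannot decide whether that rule applies, hence one cannot conclude that the set is empty, hence the default rule cannot have fired. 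Without minimality a branch could have only exempt leaves beneath it; then emptiness is determined regardless of that position, $e$ may remain operation-rooted there, and no exempt pattern of $\tree$ below that branch would match $e$. The paper invokes minimality at precisely this point of the inductive step, not as an afterthought about leaf shape.
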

\begin{proof}
First note that the standard rules of $f$ and the rules
of \code{$f$'TEST}, as defined in Sect.~\ref{sec:transsem},
have identical left-hand sides.
Hence $\tree$ is also a minimal definitional tree of
the rules of \code{$f$'TEST}, which are used to check
the applicability of the default rule.

To prove the claim, we construct a path $N_0,N_1,\ldots N_p$
in the definitional tree $\tree$ of $f$ with the following invariant properties:
(a) the pattern $\pattern_i$ of each $N_i$ unifies with $e$, and,
(b) if the last node $N_p$ is a leaf of $\tree$, $N_p$ is an \emph{exempt} node.
Establishing the invariant:
$N_0$ is the root node of $\tree$.  By definition, its pattern
$\pattern_0$ is $f\,\ol{x_k}$, where $\ol{x_k}$ are fresh distinct variables.
Hence $\pattern_0$ unifies with $e$ so that invariant (a) holds.
Furthermore, if $N_0$ is a leaf of $\tree$,
then it cannot be a \emph{rule} node,
otherwise $e$ would never be reduced by a default rule.  
Hence $N_0$ is an \emph{exempt} node, i.e., invariant (b) holds.
Maintaining the invariant:
We assume that the invariant (a) holds for node $N_k$, 
for some $k \geqslant 0$.  If $N_k$ is a leaf of $\tree$,
then, as in the base case, $N_k$ must be an \emph{exempt} node.
Hence we assume $N_k$ is a \emph{branch} of $\tree$
and show that invariant (a) can be extended to some child $N_{k+1}$ of $N_k$.
Since $N_k$ is a \emph{branch} node, $e$ and $\pattern_k$ unify.
For each child $N'$ of $N_k$, let $\pattern'$ be the pattern of $N'$,
and let $v$ be the inductive variable of the branch node $N_k$.
By the definition of $\tree$, $\pattern'=\sigma'(\pattern_k)$,
where $\sigma'=\{v \mapsto c\,\ol{x_{a_c}}\}$,
$c$ is a constructor symbol of arity $a_c$,
and $x_i$ is a fresh variable for any appropriate $i$.
Let $\sigma$ be the match of $\pattern_k$ to $e$ and $t=\sigma(v)$.
If $t$ is a variable, then any child of $N_k$ satisfies invariant (a).
Otherwise, $t$ must be rooted by some constructor symbol, say $d$,
for the following reasons.
Because $\tree$ is minimal, there are one or more rule nodes below $N_k$.
The pattern in any of these rules is an instance of $\pattern_k$
that has some constructor symbol in the position matched by $v$.
Hence, unless $t$ is constructor-rooted,
it would be impossible to tell which, if any, of these rules reduces $e$,
hence it would be impossible to say whether $e$ must be reduced
by a standard rule or the default rule.
Hence, $N_{k+1}$ is the child in which $v$ is mapped to $d\,\ol{x_{a_d}}$
so that invariant (a) also holds for $N_{k+1}$.
\end{proof}
We define the replacement of a default rule by a set of standard rules
under four assumptions. We assess the significance of these assumptions below.

\smallskip\noindent{\bf Inductive sequentiality.}
The standard rules are inductively sequential.
This is a very mild requirement in practice.
For instance, every operation of the Curry \emph{Prelude},
except for the non-deterministic choice operator \ccode{?} shown
in Sect.~\ref{ex:nondetchoice}, is inductively sequential.
Non-inductively sequential operations are problematic to evaluate efficiently.
E.g., the following operation, adapted from
\cite[Prop.~II.2.2]{Berry76}, is defined
by rules that do not admit a definitional tree:
\begin{curry}
f False True  x     = ... 
f x     False True  = ...
f True  x     False = ...
\end{curry}
To apply $f$, the evaluation to constructor normal form
of two out of the three arguments is both necessary and sufficient.
No \emph{practical} way is known to determine
which these two arguments are without evaluating all three.
Furthermore, since the evaluation of an argument may not terminate, 
the three arguments must be evaluated concurrently
(but see \cite{AntoyMiddeldorp96TCS}).

\smallskip\noindent{\bf Most general pattern.}
We assumed in our transformation that
the pattern of the default rule is most general, i.e.,
the arguments of the operation are all variables.
Choosing the most general pattern keeps the statement of
Lemma~\ref{evaluation} simple and direct.
With this assumption, no extra evaluation of the arguments
is needed for the application of a replacement rule.
To relax this assumption, we can modify Def.~\ref{replacement}
as follows.
If the left-hand side of the default rule is $f\,\ol{u_k}$,
we look for a most general unifier, say $\sigma_i$,
of $\ol{u_k}$ and $\ol{t^i_k}$.
Then rule $\sigma_i(f\,\ol{u_k} \to t)$ is in the replacement
of the default rule iff such a $\sigma_i$ exists.

\smallskip\noindent{\bf Unconditional rules.}
Both standard rules and the default rule are unconditional.
Adding a condition to the default rule is straightforward,
similar to the transformation shown in Sect.~\ref{sec:transsem}.
The condition of a default rule is directly transferred to each replacement rule
by extending display (\ref{repl-rule}) in Def.~\ref{replacement}
with the condition.
By contrast, conditions in standard rules require some care.
With a modest loss of generality, assume that the standard rules
have a definitional tree where each leaf node has a conditional
rule of the form:
\begin{equation}
\label{conditional-rule}
f~\ol{t_k} \mid c = t
\end{equation}
where $c$ is a Boolean expression and $t$ is any expression.
Lemma~\ref{uniqueness} proves that if $p$ is any $f$-rooted ground pattern
matched by $f~\ol{t_k}$ no other standard rule matches $p$.
Hence, $p$ is reduced at the root by the default rule of $f$ iff
$c$ is not satisfied by $p$.
Therefore, we need the following rule in the replacement of the default rule
\begin{equation}
f~\ol{t_k} \mid \lnot c = t
\end{equation}
where $\lnot c$ denotes the ``negation'' of $c$,
i.e., the condition satisfied by all the patterns
matched by $f~\ol{t_k}$ that do not satisfy $c$.
In the spirit of functional logic programming,
$c$ is evaluated non-deterministically.
For example, consider an operation that takes a list of
colors, say \code{Red}, \code{Green} and \code{Blue}, and
removes all \code{Red} occurrences from the list:
\begin{curry}
data Color = Red | Green | Blue$\listline$
remred cs | cs == x++[Red]++y
          = remred (x++y)
          where x,y free
remred'default cs = cs
\end{curry}
The first rule is applied if there
exist \code{x} and \code{y} that satisfy the condition.
E.g., for the list \code{[Red,Green,Red,Blue]}
there are two such combinations of \code{x} and \code{y}.
Thus, the ``negation'' of this condition must negate
the existence of \emph{any} such \code{x} and \code{y}.
This can be automatically done according to the transformational semantics
presented in Sec.~\ref{sec:transsem}, but applied to a single rule.
This example's replacement of the default rule is shown below:
\begin{curry}
remred cs | isEmpty (remred'TEST$\funset$ cs) = cs
remred'TEST cs | _++[Red]++_ == cs = ()
\end{curry}

\smallskip\noindent{\bf Constructor patterns.}
The standard rules defining an operation have constructor patterns.
Curry also provides functional patterns,
presented in Sec.~\ref{sec:flp}.
Rules defined by functional patterns can be transformed into
ordinary rules \cite[Def.~4]{AntoyHanus05LOPSTR} by moving
the functional pattern matching into the condition of a rule.
Hence, the absence of functional patterns from our discussion
is not an intrinsic limitation.
Since functional patterns are quite expressive,
operations defined with functional patterns
often consist of a single program rule and a default rule
(as in all examples shown in in Sect.~\ref{sec:examples}).
For instance, the previous operation \code{remred}
can be defined with a functional pattern as follows:
\begin{curry}
remred (x++[Red]++y) = remred (x++y)
remred'default    cs = cs
\end{curry}
Hence, the improved transformation scheme presented
in~Sect.~\ref{sec:nodupimpl} is still useful and should be
applied in combination with the transformation shown in this section.

%%%%%%%%%%%%%%%%%%%%%%%%%%%%%%%%%%%%%%%%%%%%%%%%%%%%%%%%%%%%%%%%%%%%%%%%
\section{Benchmarking}
\label{sec:bench}

To show the practical advantage of the transformation
described in the previous section, 
we evaluated a few simple operations defined in a typical
functional programming style with default rules.
For instance, the Boolean conjunction can be defined with a default rule:
\begin{curry}
and True  True  = True
and'default _ _ = False
\end{curry}
The replacement of the default rule consists of two rules
so that the transformation yields the following standard rules:
\begin{curry}
and True  True  = True
and True  False = False
and False _     = False
\end{curry}
Similarly, the computation of the last element of a list
can be defined with a default rule:
\begin{curry}
last [x] = x
last'default (_:xs) = last xs
\end{curry}
Our final example extracts all values in a list of optional
(``\code{Maybe}'') values:
\begin{curry}
catMaybes []             = []
catMaybes (Just x : xs)  = x : catMaybes xs
catMaybes'default (_:xs) = catMaybes xs
\end{curry}
With the introduction of default rules, the order of evaluation
may become more arbitrary, even though only needed steps are executed.
For example, in the first definition of operation \code{and}
both arguments must be evaluated, in any order,
for the application of the standard rule.
If the evaluation of one argument does not terminate and the other
one evaluates to \code{False},
the order in which the two arguments are evaluated becomes observable.
This situation is not directly related to the presence of a default rule.
There are two ``natural'' inductive definitions of operation \code{and},
one evaluates the first argument first, as in the second
definition of \code{and}, and another
evaluates the second argument first.
From the single standard rule of \code{and},
we cannot say which of the two definitions was intended.
If the default rule of operation \code{and} is replaced by a
set of standard rules, as per Sec.~\ref{sec:replace},
the resulting definition, which is inductively sequential,
will explicitly and arbitrarily encode which of the two arguments is
to be evaluated first.

As discussed earlier, functional logic computations execute narrowing
steps, i.e., steps in which some variable of an expression is
instantiated and the rule reducing the expression depends on the
instantiation of the variable.  For example, consider again the
\code{and} operation for its simplicity.
The evaluation of \code{and x True}, where \code{x} is a free variable,
narrows \code{x} to \code{True} to apply the standard rule and
narrows \code{x} to \code{False} to apply the default rule.
In a narrowing step, a variable is instantiated by the unification of
the expression being evaluated and the left-hand side of a rule.
This does not work with a default rule,
since the arguments in the left-hand side are themselves variables.
In particular, the transformational semantics of \code{and}
has no rule to unify \code{x} with \code{False}.
To obtain the intended behavior in narrowing steps
variables are instantiated by \emph{generators} \cite{AntoyHanus06ICLP}.
In the example being discussed, the Boolean generator is
\code{True ? False}.

Figure~\ref{performance} shows the run times (in seconds) to evaluate
the operations discussed in this section
with the different transformation schemes
(i.e., the scheme of Sect.~\ref{sec:transsem} and the
replacement of default rules presented in this section)
and different Curry implementations (where ``call size''
denotes the number of calls to \code{and} and the lengths
of the input lists for the other examples).
The benchmarks were executed on the same machine as the benchmarks
in Sect.~\ref{sec:nodupimpl}.
The results clearly indicate the advantage of replacing
default rules by standard rules, in particular for PAKCS,
which has a less sophisticated implementation of set functions
than KiCS2.
\begin{figure}[h]
\vspace*{-2ex}
\begin{center}
\begin{tabular}{lcccc}
\hline
System: & \multicolumn{4}{c}{PAKCS 1.14.0 \cite{Hanus16PAKCS}} \\
\hline
Operation / call size:  & \code{zip} / 1000 & \code{and} / 100000 & \code{last} / 2000 & \code{catMaybes} / 2000 \\
\hline
Sect.~\ref{sec:transsem}: & 3.66 & 8.46 & 2.53 & 2.45 \\
\hline
Sect.~\ref{sec:replace}:  & 0.01 & 0.25 & 0.01 & 0.01 \\
\hline
\hline
System: & \multicolumn{4}{c}{KiCS2 0.5.0 \cite{BrasselHanusPeemoellerReck11}} \\
\hline
Operation / call size:  & \code{zip} / $10^6$ & \code{and} / $10^6$ & \code{last} / $10^5$ & \code{catMaybes} / $10^6$ \\
\hline
Sect.~\ref{sec:transsem}: & 2.72 & 1.35 & 0.38 & 0.40 \\
\hline
Sect.~\ref{sec:replace}:  & 0.04 & 0.08 & 0.01 & 0.01 \\
\hline
\end{tabular}
\end{center}
\vspace*{-2ex}
\caption{\label{performance}
Performance comparison of different schemes for different compilers
for some operations discussed in this section. 
}
\end{figure}
%

%%%%%%%%%%%%%%%%%%%%%%%%%%%%%%%%%%%%%%%%%%%%%%%%%%%%%%%%%%%%%%%%%%%%%%%%
\section{Related Work}
\label{sec:related}

In this section, we compare our proposal of default rules for Curry
with existing proposals for other rule-based languages.

The functional programming language Haskell \cite{PeytonJones03Haskell}
has no explicit concept of default rules.
Since Haskell applies the rules defining a function
sequentially from top to bottom,
it is a common practice in Haskell to write a ``catch all'' rule
as a final rule to avoid writing several nearly identical rules
(see example \code{zip} at the beginning of Sect.~\ref{sec:concept}).
Thus, our proposal for default rules increases the
similarities between Curry and Haskell.
However, our approach is more general, since it also supports
logic-oriented computations, and it is more powerful,
since it ensures optimal evaluation for inductively sequential
standard rules, in contrast to Haskell
(as shown in Sect.~\ref{sec:transsem}).

Since Haskell applies rules in a sequential manner,
it is also possible to define more than one default rule
for a function, e.g., where each rule has a different specificity.
This cannot be directly expressed with our default rules
where at most one default rule is allowed.
However, one can obtain the same behavior by introducing
a sequence of auxiliary operations where each operation has one default rule.

The logic programming language Prolog \cite{DeransartEdDbaliCervoni96}
is based on backtracking where the rules defining a predicate
are sequentially applied.
Similarly to Haskell, one can also define ``catch all'' rules
as the final rules of predicate definitions.
In order to avoid the unintended application of these rules,
one has to put ``cut'' operators in the preceding standard rules.
As already discussed in Sect.~\ref{sec:concept},
these cuts are only meaningful for instantiated arguments,
otherwise the completeness of logic programming might be destroyed.
Hence, this kind of default rules can be used only if the
predicate is called in a particular mode, in contrast to our approach.
The completeness for arbitrary modes might require
the addition of concepts from Curry into Prolog, like
the demand-driven instantiation of free variables.

Various encapsulation operators have been proposed for functional
logic programs \cite{BrasselHanusHuch04JFLP} to encapsulate
non-deterministic computations in some data structure.
Set functions \cite{AntoyHanus09} have been proposed as
a strategy-independent notion of encapsulating non-determinism
to deal with the interactions of laziness and encapsulation
(see \cite{BrasselHanusHuch04JFLP} for details).
One can also use set functions to distinguish successful and non-successful
computations, similarly to negation-as-failure in logic programming,
exploiting the possibility to check result sets for emptiness.
When encapsulated computations are nested and performed lazily,
it turns out that one has to track the encapsulation level
in order to obtain intended results, as discussed in
\cite{ChristiansenHanusReckSeidel13PPDP}.
Thus, it is not surprising that set functions and related operators
fit quite well to our proposal.
Actually, many explicit uses of set functions in functional logic
programming to implement negation-as-failure can be implicitly
and more tersely encoded with our concept of default rules,
as shown in Examples~\ref{ex:queens} and~\ref{ex:colormap}.

Default rules and negation-as-failure
have been also explored in \cite{LopezSanchez04,SanchezHernandez06}
for functional logic programs.
In these works, an operator, \code{fails}, is introduced to check
whether every reduction of an expression to a head-normal form
is not successful. \cite{LopezSanchez04} proposes the use of this operator
to define default rules for functional logic programming.
However, the authors propose a scheme where the default rule
is applied if no standard rule was able to compute a head normal form.
This is quite unusual and in contrast to functional programming
(and our proposal)
where default rules are applied if pattern matching and/or conditions
of standard rules fail, but the computations of the rules' right-hand sides
are not taken into account to decide whether a default rule should be
applied.
The same applies to an early proposal for default rules
in an eager functional logic language \cite{Moreno-Navarro94}.
Since the treatment of different sources of non-determinism
and their interaction were not explored at that time,
nested computations with failures are not considered by these works.
As a consequence, the operator \code{fails} might yield unintended
results if it is used in nested expressions.
For instance, if we use \code{fails} instead of set functions
to implement the operation \code{isUnit} defined
in Example~\ref{ex:isUnit}, the evaluation of \code{isUnit$\;$failed}
yields the value \code{False} in contrast to our intended semantics.

Finally, we proposed in \cite{AntoyHanus14}
to change Curry's rule selection strategy to a sequential one.
However, it turned out that this change has drawbacks
w.r.t.\ the evaluation strategy, since formerly optimal
reductions are no longer possible in particular cases.
For instance, consider the operation \code{f} defined
in Sect.~\ref{ex-indseq} and the call \code{f$\;$loop$\;$2}.
In a sequential rule selection strategy, one starts by testing
whether the first rule is applicable. Since both arguments
are demanded by this rule, one might evaluate them
from left to right (as done in the implementation \cite{AntoyHanus14})
so that this evaluation does not terminate.
This problem is avoided with our proposal which
returns \code{2} even in the presence of a default rule for \code{f}.
Moreover, the examples presented in \cite{AntoyHanus14}
can be expressed with default rules in a similar way.

%%%%%%%%%%%%%%%%%%%%%%%%%%%%%%%%%%%%%%%%%%%%%%%%%%%%%%%%%%%%%%%%%%%%%%%%
\section{Conclusions}

We proposed a new concept of default rules for Curry.
Default rules are available in many rule-based languages,
but a sensible inclusion into a functional logic language
is demanding.
Therefore, we used advanced features for encapsulating search
to define and implement default rules.
Thanks to this approach, typical logic programming features, like
non-determinism and evaluating operations with unknown arguments, are
still applicable with our new semantics. This distinguishes our approach
from similar concepts in logic programming which simply cut alternatives.

Our approach can lead to more elegant and comprehensible
declarative programs, as shown by several examples in this paper.
Moreover, many uses of negation-as-failure, which are often
implemented in functional logic programs
by complex applications of encapsulation operators,
can easily be expressed with default rules.

Since encapsulated search is more costly than simple
pattern matching, we have also shown some opportunities
to implement default rules more efficiently.
In particular, if the standard rules are inductively sequential
and unconditional, one can replace the default rules
by a set of standard rules so that the usage of encapsulated
search can be completely avoided.

%%%%%%%%%%%%%%%%%%%%%%%%%%%%%%%%%%%%%%%%%%%%%%%%%%%%%%%%%%%%%%%%%%%%%
\paragraph{Acknowledgments.}
The authors are grateful to Sandra Dylus and the anonymous reviewers
for their suggestions to improve a previous version of this paper.
This material is based in part upon work supported 
by the National Science Foundation under Grant No.~1317249.

%\bibliographystyle{acmtrans}
%\bibliography{paper}

\end{document}